\numberwithin{equation}{section}
\theoremstyle{plain}
\newtheorem{theorem}{Theorem}[section]
\newtheorem{proposition}[theorem]{Proposition}
\newtheorem{lemma}[theorem]{Lemma}
\newtheorem{corollary}[theorem]{Corollary}
\journal{Insurance Mathematics \& Economics}
\begin{document}
	
	\begin{frontmatter}
		
		%% Title, authors and addresses
		
		%% use the tnoteref command within \title for footnotes;
		%% use the tnotetext command for the associated footnote;
		%% use the fnref command within \author or \address for footnotes;
		%% use the fntext command for the associated footnote;
		%% use the corref command within \author for corresponding author footnotes;
		%% use the cortext command for the associated footnote;
		%% use the ead command for the email address,
		%% and the form \ead[url] for the home page:
		%%
		%% \title{Title\tnoteref{label1}}
		%% \tnotetext[label1]{}
		%% \author{Name\corref{cor1}\fnref{label2}}
		%% \ead{email address}
		%% \ead[url]{home page}
		%% \fntext[label2]{}
		%% \cortext[cor1]{}
		%% \address{Address\fnref{label3}}
		%% \fntext[label3]{}
		
		\title{State Space Vasicek Model of a Longevity Bond}
		
		%% use optional labels to link authors explicitly to addresses:
		%% \author[label1,label2]{<author name>}
		%% \address[label1]{<address>}
		%% \address[label2]{<address>}
		
		\author{KALU,Georgina Onuma}
		\address{Department of Mathematical Sciences University of South Africa, South Africa}
		\author{IKPE,Chinemerem Dennis}
		\address{Michighan State University and African Institute for Mathematical Sciences, South Africa }
		\author{ORUH Benjamin Ifeanyichukwu}
		\address{Department of Mathematics Michael Okpara University of Agriculture,Umudike, Nigeria}
		\author{GYAMERAH, Samuel Asante}
		\address{Department Stattistics and Actuarial Science, Kwame Nkrumah University of Science and Technology, Ghana}

		\begin{abstract}
			Life expectancy have been increasing over the past years due to better health care, feeding and conducive environment. To manage future uncertainty related to life expectancy, various insurance institutions have resolved to come up with financial instruments that are indexed-linked to the longevity of the population. These new instrument is known as longevity bonds . In this article, we present a novel classical Vasicek one factor affine model in modelling zero coupon longevity bond price (ZCLBP) with financial and mortality risk. The interest rate $r(t)$ and the stochastic mortality $\lambda(t)$ of the constructed model are dependent but with uncorrelated driving noises. The model is presented in a linear state-space representation of the contiuous-time infinite horizon and used Kalman filter for its estimation.  The appropriate state equation and measurement equation derived from our model is used as a method of pricing a longevity bond in a financial market.
			 The empirical analysis results show that the unobserved instantaneous interest rate shows a mean reverting behaviour in the U.S. term structure. The  zero-coupon bonds yields are used as inputs for the estimation process. The results of the analysis are gotten from the monthly observations of U.S. Treasury zero coupon bonds from December, 1992 to January, 1993. The empirical evidence indicates that to model properly the historical mortality trends at different ages, both the survival rate and the yield data are needed to achieve a satisfactory empirical fit over the zero coupon longevity bond term structure. The dynamics of the resulting model allowed us to perform simulation on the survival rates, which enables us to model longevity risk. 
			\end{abstract}
		
		\begin{keyword}
			Zero coupon Longevity bond, Kalman filter, Survival rate, Term structure, Stochastic mortality, Interest rate.
		\end{keyword}
		
	\end{frontmatter}
	%\linenumbers
	%% main text
	\section{Introduction}
	Human life expectancy has been increasing significantly for the last few decades, especially in the developed countries (example United Kingdom) where changes in lifestyle, medical advancement, genetic new discoveries , etc have been playing a good role. These life expectancy have proved to have an important effect at higher ages and have caused life offices (for example Pension funds, annuity providers, the government and Defined Benefits (DB) pension schemes) to incur losses on their annuity life business. The insurance industry is therefore taking the responsibility of the cost of the unanticipated higher longevity by issuing longevity bonds.
		
	Longevity bonds can be defined as  annuity bonds whose coupons are not fixed over time but fall in line with a given survivor index. Longevity bonds were first proposed by \cite{Blake and Burrows (2001)}, with its first operational mortality-linked in 2003. In November 2004, BNP Paribas (in its role as structurer and manager) announced that the European Investment Bank (EIB) would issue a longevity bond with a maturity of twenty-five years. According to UK Office for National Statistics (ONS), the coupon payments were to be linked to a survivor index that is based on the realized mortality experience of a population of males from England and Wales aged 65 in 2003. Although the UK life offices and pension funds were the main intended investors, but then, this issue was unsuccessful at last. Different authors, for example \cite{Biffis (2005)} have modeled the mortality intensity as a stochastic process which allowed them to capture two important features of the mortality intensity: time dependency and uncertaintity of the future development. \cite{Schrager (2006)}  proposed a new model for stochastic mortality. The model was based on affine term structure model that satisfies three important requirements for application in practice: analytical tractibility, clear interpretation of the factors and compatibility with financial option pricing models. They tested the model fit using data on Dutch mortality rates and it shows that despite its shortcomings, in general it fits the data well and it is easily applicable to the pricing of well known embedded option. \cite{Dehl M. (2004)} focus on mortality at pensionable ages and showed how the risk of longevity can be taken into account. However, longevity improvement is in an unpredicted way since longevity bond serves as an instrument to hedge against longer longevity risks faced by life offices and pension plans. 
	
	In recent years, different authors have studied  longevity bond pricing. \cite{Johnny Siu-Hang Li (2010)} discussed another approach of longevity pricing model using the parametric boot strap. \cite{Chena et al (2010)} used three popular methods - first neutral method, sharp ratio rule and the wang transform and \cite{Kogurea et al (2010)} proposed a Bayesian approach.
	\cite{Thomas et al (2013)} proposed  a simple Partial internal model for longevity risk inside the framework of Solvency II. Their model tend to have connection with the Danish longevity benchmark mechanisms and it comprises of a component, which is based on the size of a given portfolio. This component can measure for each insurance portfolio the unsystematic longevity risk .
	In other to tackle the challange of measuring and modeling the longevity risk, many studies made use of the affine structure with mortality models application. For example \cite{Dehl M. (2004)} suggested for a cohort live an
	affine mortality structure with the same ages \cite{Schrager (2006)} introduced an affine mortality intensity model specially for Thiele and Makeham mortality laws and examines all ages in same time.
	Furthermore, \cite{Tzuling and Tzeng (2010)} gave a variable method to modelling longevity risk while \cite{Andrew J.G. Cairns (2011)} proposed a mortality model with an age shift by using a principal component analysis PCA. \cite{Sharon S. et al (2010)} used the Principal Component Approach to model longevity risk and compared his approach with the existing stochastic mortality models. They went further using their model to examine the ratio of annuity price both for the deferred
	life annuity and life annuity products and then try to forecast the future mortality rates. They were able to prove that their model can resolve the Lee Carter Model’s problem by using the PCA approach.\cite{Ronkainen (2012)} used the famous \cite{Lee and Carter (1992)} approach to develop a stochastic modeling of financing longevity risk in pension insurance and then used the Bayesian MCMC methods and prove that the LC model is not completely adaptable with mortality data.

	There are two sources of mortality risk that affect a portfolio of pensions or annuities: the unsystematic (idiosyncratic) and systematic (longevity) risk. In the case of the systematic mortality-related, there are different  possible numbers of risks faced by annuity providers and life insurers. In this paper, the term mortality risk will be taken to include uncertainty of all forms in future mortality rates,  increases and decreases in the rate mortality inclusive. Most of the times, longevity risk is normally taken to mean the risk that survival rates are higher than anticipated. But in this paper, is taken to mean uncertainty in either direction. In academic literature, several supporting instruments such as survivor bonds or longevity (cf.\cite{Blake and Burrows (2001)}) or survivor swaps (cf. \cite{Dowd et al (2006)}) have been proposed (see \cite{Blake et al (2006)} for an overview). 
	
	Different approaches for the modelling of aggregate  randomness in mortality rates over time have been proposed . A key earlier work is that of \cite{Lee and Carter (1992)}. In their work, they focused on the practical application of stochastic mortality and its statistical analysis. Because aggregate mortality rates are better of measured annually, \cite{Lee and Carter (1992)} and other authors (for example\cite{Brouhns et al (2002)}; \cite{Renshew et al (2003)}; \cite{Currie et al (2004)}) have adopted a similar approach and worked in discrete time model. Models following the approach of \cite{Lee and Carter (1992)} typically adapt discrete-time series models. This is to capture the random element in the stochastic development of mortality rates. Similarly, different authors (see, for example, \cite{Milevsky and Promislow (2001)}; \cite{Dehl M. (2004)}; \cite{Dahl and Moller (2005)}; \cite{Miltersen and Persson (2005)}; \cite{Biffis (2005)};\cite{Schrager (2006)}) have developed models in a continuous-time framework. This is because discrete model parameters can be calibrated to historical mortality experience while continuous-time model has more benefit in areas such as valuation and hedging of life insurance liabilities especially when we combine the mortality model with some financial analysis. 
	Hence, continuous-time models have a key role to play when it comes to understanding how the prices of mortality-linked securities develops over time. 
	
	In this article, we addressed how to construct a market- consistent-valuation framework for long-term risky investment in the insurance and financial market. And also how interest rate models can be constructed to obtain analytically tractable and accurate pricing formulae for long-term assets. To achieve this we formulate a linear state-space representation of the continuous- time infinite horizon short rate models and use Kalman filter to jointly estimate the current term structure and its dynamics for markets with illiquid long-term bond. Analytical tractability is the most important of what we intend to achieve.This is because we want to take into consideration the parameter uncertainty and show how our result can be used to price new zero coupon longevity bond.  Consequently, we choose to develop a model in continuous-time and adopt an approach that is similar to that of \cite{Vasicek (1977)}.
	We propose a zero coupon longevity bond price model that is used to fit U.S. Treasury bond maturity data and show how the calibrated model can be used to price financial and mortality risk in longevity bond. To estimate our model , we use a state space formulation which contains the measurement and the transition equations. The model involves two stochastic factors - the interest rate and mortality risk. While former takes into account the yield data, the later takes into account the real survival rate. We present empirical evidence that indicates that to model sufficiently historical mortality longevity at different ages, these factors (the interest rate and mortality risk) are necessary in other to achieve a reasonable empirical fit over the zero coupon longevity bond term structure. Unlike the
	Econometrics models as that of the famous \cite{Lee and Carter (1992)}, our transition equations which
	are the time-series dynamics are presented in the model like a stochastic factor which allow us to simulate survival rates, thereby enabling us to model longevity risk. The measurement equation represents the yields observed data which is exponentially affine in the factors.
		
	For pricing the zero coupon longevity bond, we adopt the risk-adjusted (or “risk-neutral”)
	approach of pricing adopted by, for example,\cite{[Musiela and Rutkowski(1998)}. We suggest a simple method for making the adjustment between risk-adjusted and real probabilities, which involves a constant market price both for the longevity and parameter risk. This is because of the current lack of market data. 
	
	The remaining of this paper is organized as follows. In section 2, we discuss the relationship between financial and motality longevity; present the formular for the mortality rate and interest rate equation; discuss the meaning of longevity bond as a financial product; present asymptotic yield for the infinite horizon and give the zero coupon longevity bond price for the constant parameter case.The state space formulation of the model and parameters estimation is discuss in section 3. In section 4, the emperical and data analysis were presented. In section 5, some conclusions are offered.
	
	\section{Financial and Mortality Longevity Rate Model}
	Longevity bonds are the first financial products to offer longevity protection by hedging the trend in longevity.
	Longevity bonds are needed because life expectancy has been constantly increasing as a result of  medical improvements, better life standards, etc. As pointed out by \cite{Hardy} that life expectancy for men aged 60 is more than 5 years' longer in 2005 than anticipated to be in mortality projections made in the 1980 and as such, this has brought about uncertainty of longevity projections.
	
	In other to meet up with this demand, the Capital markets offer longevity bonds with coupons depending on the survival rate of a given population which can be used to hedge a big portion of the longevity risk. Longevity bonds
	can take a large variety of forms which can vary enormously in their sensitivities to longevity shocks. Hence, longevity bonds is ideal asset for hedging the longevity risk of a pension fund. 
	\subsection{Zero Coupn Longevity Bond: The term structure equation}
	Given a probabilty space $(\Omega,\mathcal{F},\mathcal{F}_{t(t\geq 0)},\mathcal{P})$, which satisfies the usual hypothesis that the filtration $\mathcal{F}_{t(0\leq t \leq T)}$ is right continuous with left limits where $\mathcal{P}$ is the joint process of interest rate, then the mortality rate is given by 
	\begin{equation}\label{main}
	z(t)=y_0+y_1X_1(t) 
	\end{equation}
	for $z, y_0, y_1 $ and $X_1$ in $\mathcal{R}^2$\\
	Let $y_1=\left(\begin{array}{c}
	-1\\
	-1
	\end{array}\right)$ and $y_0 =\mu = \left(\begin{array}{c}
	\mu^r\\
	\mu^\lambda
	\end{array}\right)$ , equation \ref{main} becomes
	\begin{equation}\label{long}
	z(t)=\mu-X_1(t)
	\end{equation}
	\begin{align}
	z(t) =  \left( \begin{array}{c}
	r(t) \\
	\lambda(t)
	\end{array} \right) =\left( \begin{array}{c}
	\mu^r \\ 
	\mu^\lambda\\     
	\end{array} \right)-\left( \begin{array}{c}
	X_1^r(t)\\
	X_1^\lambda(t)
	\end{array}\right)  
	\end{align}
	Let the dynamics of ${X_1}$ be given as
	\begin{equation}
	dX_1=-\zeta_1 X_1dt + c_1dW_1,\label{dyna}
	\end{equation}
	where $\zeta_1$ is the drift, $c_1$ is the diffusion, and $dW_1$ is two dimensional vector. From equation \ref{dyna} 
	\begin{equation}
	dX_1=-\zeta_1 X_1dt + \kappa_{11}dZ_1
	\end{equation}
	where $dZ_1$ is two dimensional vector and are independent and $\kappa_{11}$ is scaler.
	The dynamics of  the process $z$ is given by the following stochastic differential equation. 
	\begin{equation}
	dz(t)=\zeta_1(\mu-z(t))dt-c_1dW_1(t)
	\end{equation}
	$\zeta_1$ is the mean reversion rate while $c_1$ is the diffusion term. We also assume that a market exist for bonds of every available maturity and that the market is abitrage free. 
		\subsection{Affine Form of Longevity Bond Price}
	There are several ways of solving a longevity problem and one of it is classical Affine term structure model. Generally, affine term structure models are useful in modelling mortality intensity in the literature. \cite{Luciano et al. (2005)} described the mortality intensity by affine models and calibrated the intensity processes by using observed and projected UK mortality
	tables. They claimed that affine processes with deterministic part increases exponentially which could describe the evolution of mortality intensity properly. \cite{Russo et al. (2011)} calibrated three affine stochastic mortality models using term assurance premiums of three Italian insurance companies and proposed that those affine models can be used for pricing mortality-linked securities. 
	
	The theory of Affine model rank among the most popular model both in theory and practice, and as such many examples have been investigated. The first term structure model-Vasicek model is an affine model. Others are Cox, Ingersol, Ross (CIR), Hull and White or Longstaff and Schwartz which are also Affine model.\\
	In the bid of the expansion of the Affine model,\cite{Duffie and Kan (1996)}  investigated the model and developed a general theory while \cite{Dai and Singleton (2000)} provided the classification and thereafter established most of the general representative of each class of Affine model.\\The tractability and flexibilty of Affine model makes its popularity to be known and aslo there are often explicit solution for the bond price and bond option price. In this work, we will  be considering another form of classical Vasicek one-factor model of a longevity bond price whose \textit{zero coupon longevity bond} (ZCLB) is given as:
	\begin{theorem} \label{bond equation} \textit{(Babbs, S. H., and M. J. P. Selby 1993)} The price of a zero coupon longevity bond with maturity at time $S$ is given as
		\begin{multline}\label{bond}
		B_l(S,t)=\exp\left\lbrace-\int_{t}^{S}\left\vert\mu(u)\right\vert du-\int_{t}^{S}\theta_1\left\vert\sigma_1\right\vert(S,u)-\frac{1}{2}\left\vert\sigma_1\right\vert^2(S,u)du+\right.\\ \left. \frac{G_1(S)- G_1(t)}{G_1^\prime(t)} X_1(t)\right\rbrace
		\end{multline}
		where 
		$\left\vert.\right\vert =L_1$ norm, $0\leq t\leq S$
		\begin{align}
		G_1&=\int_{0}^{t}\exp\left\lbrace-\int_{0}^{u}\zeta_1(s)ds\right\rbrace du, \quad and\\&=\frac{1}{\zeta_1}(1-e^{-\zeta_1(S-u)})\\
		\sigma_1(S,t)&= \frac{G_1(S)-G_1(t)}{G_1^\prime(t)}\kappa_{11}(t)\\
		&=\tau H(\zeta_1\tau)X_1(t)
		\end{align}
	\end{theorem}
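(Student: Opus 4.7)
The plan is to derive the formula by risk-neutral valuation, solving the Gaussian SDE for the factor $X_1$ explicitly, and then identifying the resulting deterministic coefficients with the quantities $G_1$ and $\sigma_1$ defined in the statement.

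First I would write down the risk-neutral pricing identity
\begin{equation*}
B_l(S,t)=\mathbb{E}^{\mathbb{Q}}\!\left[\exp\!\left(-\int_t^S z(u)\,du\right)\Big|\mathcal{F}_t\right],
\end{equation*}
and use a Girsanov change of measure with constant market price of risk $\theta_1$ so that under $\mathbb{Q}$ the dynamics of $X_1$ pick up the drift adjustment $-\kappa_{11}\theta_1$. Substituting $z(u)=\mu(u)-X_1(u)$ from \eqref{long} splits the exponent into a deterministic piece $-\int_t^S|\mu(u)|\,du$ (which will appear verbatim in the final formula) and a stochastic piece $\int_t^S X_1(u)\,du$ whose $\mathbb{Q}$-expectation still has to be computed.

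Next I would solve the Ornstein--Uhlenbeck equation $dX_1=-\zeta_1 X_1\,dt+\kappa_{11}\,dZ_1$ in closed form using the integrating factor $\exp\!\left(\int_0^u\zeta_1(s)\,ds\right)$; this produces precisely the function $G_1$ of the statement since $G_1^\prime(t)=\exp\!\left(-\int_0^t\zeta_1(s)\,ds\right)$. A short Fubini interchange then yields
\begin{equation*}
\int_t^S X_1(u)\,du = \frac{G_1(S)-G_1(t)}{G_1^\prime(t)}\,X_1(t) + \int_t^S \sigma_1(S,u)\,dZ_1(u),
\end{equation*}
where $\sigma_1(S,u)=\tfrac{G_1(S)-G_1(u)}{G_1^\prime(u)}\kappa_{11}(u)$ matches the definition given in the theorem. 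The first term here is $\mathcal{F}_t$-measurable and accounts for the $X_1(t)$-linear term in the exponent of $B_l$.

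Finally I would compute the conditional $\mathbb{Q}$-expectation of $\exp\!\left(\int_t^S\sigma_1(S,u)\,dZ_1(u)\right)$ under the shifted measure. Because the integrand is deterministic, this Wiener integral is Gaussian and its exponential moment is evaluated by the standard identity $\mathbb{E}[e^X]=e^{\tfrac12\mathrm{Var}(X)}$; combined with the $\theta_1$ drift adjustment from Girsanov this produces the two remaining integrals $-\int_t^S\theta_1|\sigma_1|(S,u)\,du$ and $+\tfrac12\int_t^S|\sigma_1|^2(S,u)\,du$ in the exponent (the signs flipping once when we move from $Z_1$ to the $\mathbb{Q}$-Brownian motion, and once from the sign of $X_1$ in $z$). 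Assembling the four terms yields \eqref{bond}.

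The main obstacle I anticipate is bookkeeping of signs and measure changes rather than any deep analytical step: the identity $z=\mu-X_1$ together with the Girsanov shift and the convexity correction from the Gaussian exponential moment each contribute terms of the same functional form, and it is easy to cancel a sign incorrectly. A secondary technical point is verifying that the two-dimensional vector nature of $W_1$ (with $L_1$ norms appearing in the statement) is compatible with the scalar treatment above; this reduces to applying the argument componentwise and noting that the driving noises are independent, so the exponential moment factorises and the $L_1$ norms emerge naturally.
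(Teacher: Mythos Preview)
Your derivation is correct and is exactly the standard route for Gaussian affine bond prices: risk-neutral valuation, explicit solution of the Ornstein--Uhlenbeck factor via the integrating factor (whence $G_1$), Fubini on $\int_t^S X_1(u)\,du$ to isolate the $\mathcal{F}_t$-measurable piece and a deterministic Wiener integral, and then the log-normal moment identity together with the Girsanov drift $-\kappa_{11}\theta_1$ to produce the $\theta_1|\sigma_1|$ and $\tfrac12|\sigma_1|^2$ terms. Your remark that the two-dimensional, independent-noise structure makes the argument factor componentwise, so that $L_1$ norms appear, is also correct.

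As for comparison with the paper: there is essentially nothing to compare. The paper does not supply its own proof of this theorem; it simply cites Babbs and Selby (1993) and moves on. Your sketch is therefore more complete than what the paper offers, and it is in fact the argument one finds in the Babbs--Nowman/Babbs--Selby line of papers for generalized Vasicek models. The only caution is the one you already flag: the sign bookkeeping between $z=\mu-X_1$, the direction of the market-price-of-risk adjustment, and the convexity correction is delicate, so when you write it out in full make the convention for $\theta_1$ explicit at the outset and carry it through consistently.
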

	\begin{proof}[Proof of theorem]
		The detailed proof of theorem \ref{bond equation} can be found in \cite{Babbs and Nowman (1993)}
	\end{proof}
The price at time $t$ of a zero coupon longevity bond $B_l(t,S)$ under the vasicek factor model with maturity $S$ can be written as:
\begin{equation}
B_l(t,S,X_y(t))=\mathbb{E}_z^\mathbb{Q}\left[\exp\left(-\int_{t}^{S}\left\vert z\right\vert(u,X_y(u)) du\right)\right],\ \ \ B_l(S,S)=1.
\end{equation}
where $\mathbb{Q}$ is a martingale measure. 	\begin{corollary}
		The continuously compounded zero coupon longevity yield $R_l(S,t)$ is given by
		\begin{equation}\label{constant}
		R_l(S,t)=\frac{1}{\tau}\left(\int_{t}^{S}\left\vert\mu(u)\right\vert du+\int_{t}^{S}\theta_1\left\vert\sigma_1\right\vert(S,u)-\frac{1}{2}\left\vert\sigma_1\right\vert^2(S,u)du-\frac{G_1(S)- G_1(t)}{G_1^\prime(t)}X_1(t)\right)
		\end{equation}
		where 
		\begin{equation*}
		\int_{t}^{S}\left\vert\mu\right\vert du=\mu\tau,\ \ \ \ \tau=S-t
		\end{equation*}
	\end{corollary}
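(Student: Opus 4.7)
The plan is to derive the corollary as an immediate algebraic consequence of Theorem~\ref{bond equation}. Recall that for a zero coupon bond with price $B_l(S,t)$ and maturity $\tau = S-t$, the continuously compounded yield is defined by the relation $B_l(S,t) = \exp(-\tau R_l(S,t))$, so that
\begin{equation*}
R_l(S,t) = -\frac{1}{\tau}\log B_l(S,t).
\end{equation*}
This is the standard bridge between bond prices and yields, and it is the only non-formal identity I need.

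Next, I would substitute the explicit form of $B_l(S,t)$ supplied by Theorem~\ref{bond equation}. Since the bond price is given as a single exponential, applying $\log$ is immediate: the logarithm passes through $\exp$ and the minus sign in front of $\frac{1}{\tau}$ flips the signs of the three integrals, while turning the trailing $+\frac{G_1(S)-G_1(t)}{G_1'(t)}X_1(t)$ term into $-\frac{G_1(S)-G_1(t)}{G_1'(t)}X_1(t)$. Collecting the terms inside a single pair of brackets and factoring out $1/\tau$ yields exactly the expression stated for $R_l(S,t)$.

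Finally, I would verify the auxiliary identity $\int_{t}^{S}|\mu(u)|\,du = \mu\tau$ quoted below the corollary. Under the constant-parameter regime of Section~2.2, $\mu$ is time-independent, so $|\mu(u)|$ is simply the constant $|\mu|$ and the integral collapses to $|\mu|(S-t) = \mu\tau$ (the absolute-value notation here denotes the $L_1$ norm introduced in Theorem~\ref{bond equation}, and $\mu$ is taken componentwise positive so $|\mu|=\mu$). No obstacle arises: every step is either a definition or a direct manipulation of the exponential in Theorem~\ref{bond equation}. The only mildly delicate point is being consistent about the $L_1$-norm notation $|\cdot|$ used for the vector-valued quantities $\mu$, $\sigma_1$, and $z$, but since Theorem~\ref{bond equation} has already fixed this convention, the derivation of the corollary is essentially a one-line computation once the yield is written in terms of $\log B_l(S,t)$.
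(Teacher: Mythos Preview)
Your proposal is correct and matches the paper's approach: the paper's proof is the single sentence ``This is a direct consequence of theorem~\ref{bond equation},'' and you have spelled out exactly that consequence, namely applying the yield definition $R_l(S,t)=-\frac{1}{\tau}\log B_l(S,t)$ to the exponential expression in Theorem~\ref{bond equation}. Your remarks on the $L_1$-norm convention and the constant-$\mu$ simplification are accurate and more explicit than anything the paper provides.
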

	\begin{proof}
		This is a direct consequence of theorem \ref{bond equation}
	\end{proof}
	\subsection{Asymptotic yield for Infinite horizon}
	Given the maturity of a zero coupon longevity bond is infinite and all the parameters remain constant, then the folowing proposition holds true.
	\begin{proposition}
		Let $R_l(\infty)$ represent the yield on a zero coupon longevity bond with infinite maturity. Since $\tau=S-t$ for $t\leq S$ then $S\rightarrow\infty$ implies $\tau\rightarrow\infty$. Consequently
		\begin{equation}
		R_l(\infty)=\lim\limits_{S\rightarrow\infty}R_l(S,t)=\lim\limits_{\tau\rightarrow\infty}R_l(\tau+t,t)
		\end{equation}
		On constant parameters  equation\ref{constant} becomes
		\begin{equation}\label{const para}
		R_l(S,t)=\left[R_l(\infty)-w(\tau)-H(\zeta_1\tau)X_1\right],\ \ \ \  t\in[0,S]
		\end{equation}
		with
		\begin{subequations}
			\begin{align}
			R_l(\infty)&=\mu+\theta_1\frac{\kappa_{11}}{\zeta_1}-\frac{1}{2}\left(\frac{\kappa_{11}}{\zeta_1}\right)^2,\\
			w(\tau)&= H(\zeta_1\tau)\left[\theta_1\frac{\kappa_{11}}{\zeta_1}-\frac{\kappa_{11}\kappa_{12}}{{\zeta_1\zeta_1}}\right]+\frac{1}{2}((H{\zeta_1+\zeta_2})\tau)\frac{\kappa_{11}\kappa_{12}}{{\zeta_1\zeta_2}},
			\end{align}
		\end{subequations}
	\end{proposition}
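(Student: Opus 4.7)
The plan is to start from the constant-parameter form of the corollary's yield formula and explicitly evaluate the three integrals, then separate out the pieces that survive the $\tau \to \infty$ limit from those that decay. First I would record the simplifications that come from constant parameters: $G_1(t) = \tfrac{1}{\zeta_1}(1 - e^{-\zeta_1 t})$ with $G_1'(t) = e^{-\zeta_1 t}$, which in particular gives the clean identity $\tfrac{G_1(S)-G_1(t)}{G_1'(t)} = \tau H(\zeta_1\tau)$ where $H(x) = (1-e^{-x})/x$, and $|\sigma_1|(S,u) = \tfrac{\kappa_{11}}{\zeta_1}\bigl(1 - e^{-\zeta_1(S-u)}\bigr)$. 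This immediately handles the stochastic loading on $X_1(t)$, giving the term $-H(\zeta_1\tau)X_1$ after dividing by $\tau$, and handles the drift term $\tfrac{1}{\tau}\int_t^S \mu\,du = \mu$.

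Next I would compute the two remaining integrals by a change of variable $s = S-u$. The market-price-of-risk integral becomes
\begin{equation*}
\frac{\theta_1}{\tau}\int_0^\tau \frac{\kappa_{11}}{\zeta_1}\bigl(1 - e^{-\zeta_1 s}\bigr)\,ds = \theta_1\frac{\kappa_{11}}{\zeta_1}\bigl(1 - H(\zeta_1\tau)\bigr),
\end{equation*}
and the Jensen correction integral $-\tfrac{1}{2\tau}\int_0^\tau |\sigma_1|^2(S,S-s)\,ds$ splits by expanding the square into a constant term, a linear-exponential term producing an $H(\zeta_1\tau)$, and a squared-exponential term producing an $H(2\zeta_1\tau)$ factor.

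I would then take the $\tau \to \infty$ limit. Since $H(\zeta_1\tau) \to 0$ and $H(2\zeta_1\tau) \to 0$ in that limit, the surviving constants collapse to $R_l(\infty) = \mu + \theta_1\tfrac{\kappa_{11}}{\zeta_1} - \tfrac{1}{2}\bigl(\tfrac{\kappa_{11}}{\zeta_1}\bigr)^2$, establishing the first displayed identity. Subtracting $R_l(\infty)$ from the finite-$\tau$ yield collects precisely the finite-maturity corrections into the function $w(\tau)$, while the $X_1$ loading remains as $-H(\zeta_1\tau)X_1$.

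The main obstacle is reconciling the statement of $w(\tau)$: as written it contains $\kappa_{12}$, $\zeta_2$, and a factor $H((\zeta_1+\zeta_2)\tau)$ that have not been defined in the one-factor setting developed here, so my one-factor calculation naturally yields $w(\tau) = H(\zeta_1\tau)\,\theta_1\tfrac{\kappa_{11}}{\zeta_1} - \tfrac{1}{2} H(2\zeta_1\tau)\bigl(\tfrac{\kappa_{11}}{\zeta_1}\bigr)^2$ rather than the printed expression. I would therefore either interpret the stated $w(\tau)$ as the two-factor generalization (with a second component $X_2$ driven by $\kappa_{12}, \zeta_2$, whose analogous contributions add separately because the driving Brownian motions are independent) or flag the discrepancy and present the one-factor version as the correct specialization. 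Apart from this bookkeeping issue, every step is a straightforward exponential integration.
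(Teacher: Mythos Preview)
Your approach is exactly the one the paper takes: compute $\frac{G_1(S)-G_1(t)}{G_1'(t)} = \tau H(\zeta_1\tau)$, evaluate the $\theta_1|\sigma_1|$ integral to get $\tau\theta_1\tfrac{\kappa_{11}}{\zeta_1}\bigl(1-H(\zeta_1\tau)\bigr)$, evaluate the $|\sigma_1|^2$ integral, substitute into the corollary's formula, and let $\tau\to\infty$ using $H(\gamma\tau)\to 0$. On the discrepancy you flagged, the paper resolves it by expanding $|\sigma_1|^2$ as the bilinear product $\tfrac{\kappa_{11}}{\zeta_1}(1-e^{-\zeta_1(S-u)})\cdot\tfrac{\kappa_{11}}{\zeta_2}(1-e^{-\zeta_2(S-u)})$, i.e.\ keeping the two copies of the factor formally indexed by $1$ and $2$ in the style of the multi-factor Babbs--Nowman formulas; this is precisely the ``two-factor generalization'' interpretation you proposed, and it is what produces the $\kappa_{12}$, $\zeta_2$, and $H((\zeta_1+\zeta_2)\tau)$ terms in $w(\tau)$.
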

	\begin{proof}
		From  equation\ref{bond}  we have
		\begin{equation*}
		\frac{G_1(S)- G_1(t)}{G_1^\prime(t)} X_1(t)=\tau\frac{(1-e^{-\zeta_1\tau})}{\zeta_1\tau}X_1(t)
		\end{equation*}
		\begin{equation}
		=\tau H(\zeta_1\tau)X_1(t)
		\end{equation}
		where $H(\zeta_1\tau)=
		\left(\begin{array}{cc}
		\frac{(1-e^{-\zeta_1^r\tau})}{\zeta_1^r\tau}&0\\
		0&\frac{(1-e^{-\zeta_1^\lambda\tau})}{\zeta_1^\lambda\tau}
		\end{array}\right)$.\\ Also
		\begin{align*}
		\sigma_1(S,u)&= \frac{G_1(S)- G_1(u)}{G_1^\prime(u)}\kappa_{11},\\&=\frac{1}{\zeta_1}(1-e^{-\zeta_1(S-u)})\kappa_{11}\\ 
		\end{align*}
		From  equation \ref{constant} let the two $\int_{t}^{S}$  
		\begin{equation*}
		E(\tau)=\int_{t}^{S}\left(\theta_1\left\vert\sigma_1\right\vert(S,u)-\frac{1}{2}\left\vert\sigma_1\right\vert^2(S,u)\right)du
		\end{equation*}
		be represented by $\Psi$ and $\Sigma$
		\begin{equation}
		\begin{aligned}
		\int_{t}^{S}\Psi \ \ \ &= \int_{t}^{S}\theta_1\sigma_1(S,u)du \\&=\theta_1\int_{t}^{S}\left(\frac{1}{\zeta_1}(1-e^{-\zeta_1(S-u)})\right)\kappa_{11}du\\&=\frac{\theta_1}{\zeta_1}\int_{t}^{S}(1-e^{-\zeta_1(S-u)})\kappa_{11}du\\&=\frac{\theta_1\kappa_{11}}{\zeta_1}\left[(S-t)-\frac{1}{\zeta_1}(1-e^{-\zeta_1(S-t)})\right]du\\&=\frac{\theta_1\kappa_{11}}{\zeta_1}\left(\tau-H(\zeta_1\tau)\right)\\&=\tau \frac{\theta_1\kappa_{11}}{\zeta_1}\left((1-H(\zeta_1\tau)\right)
		\end{aligned}
		\end{equation}
		\begin{equation}
		\tau \left(\begin{array}{c}
		\frac{\theta_1\kappa_{11}}{\zeta_1^r}\\\frac{\theta_1\kappa_{11}}{\zeta_1^\lambda}
		\end{array}\right)\left(\begin{array}{cc}
		1-H(\zeta_1^r\tau) &0\\
		0&1-H(\zeta_1^\lambda\tau)
		\end{array}\right)
		\end{equation}
		\begin{equation}
		\begin{aligned}
		\int_{t}^{S}\Sigma &= \int_{t}^{S}(\sigma_1)^2(S,u)du,\\&= \int_{t}^{S}\left(\frac{1}{\zeta_1}(1-e^{-\zeta_1(S-u)})\kappa_{11}\right)^2du,\\&=\int_{t}^{S}\frac{1}{\zeta_1}(1-e^{-\zeta_1(S-u)})\kappa_{11}\frac{1}{\zeta_2}(1-e^{-\zeta_2(S-u)})\kappa_{11}du\\&=\left(\frac{\kappa_{11}}{\zeta_1}\right)^2\int_{t}^{S}\left(1-e^{-\zeta_1(S-u)}-e^{-\zeta_2(S-u)}+e^{-(\zeta_1+\zeta_2)(S-u)}\right)du,
		\end{aligned}
		\end{equation}
		\begin{multline}
		=\left(\frac{\kappa_{11}}{\zeta_1}\right)^2\left((S-t)-\frac{1}{\zeta_1}(1-e^{-\zeta_1(S-t)})-\frac{1}{\zeta_2}(1-e^{-\zeta_2(S-t)})\right. \\ \left.+\frac{1}{\zeta_1+\zeta_2}1-e^{-(\zeta_1+\zeta_2)(S-t)}\right)
		\end{multline}
		\begin{align}
		&=\left(\frac{\kappa_{11}}{\zeta_1}\right)^2\left(\tau-H(\zeta_1\tau)-H(\zeta_2\tau)+\frac{1}{\zeta_1+\zeta_2}((H{\zeta_1+\zeta_2})\tau)\right)\\&=\tau\left(\frac{\kappa_{11}}{\zeta_1}\right)^2+\frac{\kappa_{11}\kappa_{12}}{{\zeta_1\zeta_1}}H(({\zeta_1+\zeta_2})\tau)-2\tau\frac{\kappa_{11}\kappa_{12}}{{\zeta_1\zeta_2}}H(\zeta_{12}\tau)
		\end{align}
		\begin{align}
		\therefore E(\tau)&=\tau\frac{\theta_1\kappa_{11}}{\zeta_1}(1-H(\zeta_1\tau))-\frac{1}{2}\left[\tau\left(\frac{\kappa_{11}}{\zeta_1}\right)^2+\frac{\kappa_{11}\kappa_{12}}{{\zeta_1\zeta_2}}((H{\zeta_1+\zeta_2})\tau)-2\tau\frac{\kappa_{11}\kappa_{12}}{{\zeta_1\zeta_2}}H(\zeta_{12}\tau)\right]\\&=\tau\frac{\theta_1\kappa_{11}}{\zeta_1}(1-H(\zeta_1\tau))-\frac{1}{2}\tau\left(\frac{\kappa_{11}}{\zeta_1}\right)^2-\frac{1}{2}\frac{\kappa_{11}\kappa_{12}}{{\zeta_1\zeta_2}}((H{\zeta_1+\zeta_2})\tau)+\tau\frac{\kappa_{11}\kappa_{12}}{{\zeta_1\zeta_2}}H(\zeta_{12}\tau)\\&=\tau\left[\frac{\theta_1\kappa_{11}}{\zeta_1}(1-H(\zeta_1\tau))-\frac{1}{2}\left(\frac{\kappa_{11}}{\zeta_1}\right)^2\right]+\tau\left[-\frac{1}{2}\frac{\kappa_{11}\kappa_{12}}{{\zeta_1\zeta_2}}((H{\zeta_1+\zeta_2})\tau)+\frac{\kappa_{11}\kappa_{12}}{{\zeta_1\zeta_2}}H(\zeta_{12}\tau)\right]\label{equaton final}
		\end{align}
		Substituting  equation \ref{equaton final} into equation \ref{constant} and taking limit as $\tau \rightarrow\infty$ equation \ref{constant} becomes
		\begin{multline}
		R_l(S,u)=\frac{1}{\tau}\left(\int_{t}^{S}\left\vert\mu(u)\right\vert du+\tau\left[\frac{\theta_1\kappa_{11}}{\zeta_1}(1-H(\zeta_1\tau))-\frac{1}{2}\left(\frac{\kappa_{11}}{\zeta_1}\right)^2\right]\right.\\ \left.+\tau\left[-\frac{1}{2}\frac{\kappa_{11}\kappa_{12}}{{\zeta_1\zeta_2}}((H{\zeta_1+\zeta_2})\tau)+\frac{\kappa_{11}\kappa_{12}}{{\zeta_1\zeta_2}}H(\zeta_{12}\tau)\right]+\tau H(\zeta_1\tau)X_1(t)\right)
		\end{multline}
		\begin{multline}
		=\mu+\frac{\theta_1\kappa_{11}}{\zeta_1}-\frac{\theta_1\kappa_{11}}{\zeta_1}H(\zeta_1\tau)-\frac{1}{2}\left(\frac{\kappa_{11}}{\zeta_1}\right)^2-\frac{1}{2}\tau \frac{\kappa_{11}\kappa_{12}}{{\zeta_1\zeta_2}}((H{\zeta_1+\zeta_2})\tau) \\ +\frac{\kappa_{11}\kappa_{12}}{{\zeta_1\zeta_1}}H(\zeta_{12}\tau)+\tau H(\zeta_1\tau)X_1(t)
		\end{multline}
		As $\tau\rightarrow\infty$ 
		\begin{align*}
		\lim\limits_{\tau\rightarrow\infty}H(\zeta_1)&=0\\
		\lim\limits_{\tau\rightarrow\infty}H(\zeta_{12})&=0\\
		\lim\limits_{\tau\rightarrow\infty}((H{\zeta_1+\zeta_2})\tau)&=0
		\end{align*}
		\begin{equation}
		\therefore R_l(\infty)=\mu+\frac{\theta_1\kappa_{11}}{\zeta_1}-\frac{1}{2}\left(\frac{\kappa_{11}}{\zeta_1}\right)^2
		\end{equation}
		\begin{equation}
		=\left(\begin{array}{c}
		\mu^r\\ \mu^\lambda
		\end{array}\right)+
		\theta_1
		\left(\begin{array}{cc}
		\frac{\kappa_{11}}{\zeta_1^r}&0\\0& \frac{\kappa_{11}}{\zeta_1^\lambda}\end{array}\right)
		-\frac{1}{2}\left(\begin{array}{cc}
		\frac{\kappa_{11}}{\zeta_1^r}&0\\0& \frac{\lambda\kappa_{11}}{\zeta_1^\lambda}
		\end{array}\right)^2
		\end{equation}
	\end{proof}
	where 
	\begin{align*}
	\tau\equiv S-t \ \ \ \ \ \ 
	\qquad {and}\\
	H(\gamma)=\frac{1-e^{-\gamma}}{\gamma}
	\end{align*}
	$H$ is a function of $\zeta_1$, $\zeta_2$ and $\tau;\gamma$ is a dummy variable.
	\subsubsection{On Constant Parameter Zero Coupon Longevity Bond Price}
	Under constant parameters assumption of the vasicek longevity model \ref{bond}, the price of a zero coupon longevity bond of any maturity of  equation \ref{bond} is given as in Lemma \ref{lem} 
	\begin{lemma} \label{lem}
		In the case where all the parameters (the long run average rate $\mu$, the market price of risk process $\theta_1$ the mean reversion $\zeta_1$ and diffusion $\kappa_{11}$ coefficents) are all constant, the longevity bond price can be defined as
		\begin{equation}
		B_l(S,t)=\exp\left\lbrace-\tau\left[R_l(\infty)-w(\tau)-H(\zeta_1\tau)X_1\right]\right\rbrace,\ \ \ \  t\in[0,S]
		\end{equation}
		with
		\begin{subequations}
			\begin{align}
			R_l(\infty)&=\mu+\theta_1\frac{\kappa_{11}}{\zeta_1}-\frac{1}{2}\left(\frac{\kappa_{11}}{\zeta_1}\right)^2,\label{constA}\\
			w(\tau)&= H(\zeta_1\tau)\left[\theta_1\frac{\kappa_{11}}{\zeta_1}-\frac{\kappa_{11}\kappa_{12}}{{\zeta_1\zeta_2}}\right]+\frac{1}{2}((H{\zeta_1+\zeta_2})\tau)\frac{\kappa_{11}\kappa_{12}}{{\zeta_1\zeta_2}},\label{constB}
			\end{align}
		\end{subequations}
	\end{lemma}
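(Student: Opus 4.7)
The plan is to deduce Lemma \ref{lem} as a direct corollary of the proposition immediately preceding it, together with the defining relationship between the continuously compounded yield and the zero-coupon bond price. No new stochastic analysis is needed; the work has already been carried out in the proof of the proposition, and what remains is an algebraic repackaging under the constant-parameter assumption.

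First I would recall the standard identification between the yield $R_l(S,t)$ and the bond price $B_l(S,t)$. By definition of the continuously compounded yield on a zero-coupon instrument of maturity $S$ at time $t$, one has
\begin{equation*}
B_l(S,t) \;=\; \exp\bigl\{-\tau\, R_l(S,t)\bigr\}, \qquad \tau = S-t.
\end{equation*}
This is the inversion of the defining formula in the corollary following Theorem \ref{bond equation}, obtained simply by solving for $B_l(S,t)$ in the relation $R_l(S,t)=-\tau^{-1}\log B_l(S,t)$.

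Next I would invoke the proposition, which, under the assumption that $\mu$, $\theta_1$, $\zeta_1$ and $\kappa_{11}$ are all constant, establishes the identity
\begin{equation*}
R_l(S,t) \;=\; R_l(\infty) - w(\tau) - H(\zeta_1\tau)X_1,
\end{equation*}
with $R_l(\infty)$ and $w(\tau)$ given by the two expressions \eqref{constA} and \eqref{constB}. The derivation there already carried out the two integrals $\int_t^S\Psi\,du$ and $\int_t^S\Sigma\,du$, identified the limiting asymptotic yield via $\lim_{\tau\to\infty}H(\zeta_1\tau)=0$, and absorbed the constant contributions into $R_l(\infty)$. All the hard integration work is therefore behind us.

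Finally, substituting this constant-parameter expression for $R_l(S,t)$ directly into $B_l(S,t)=\exp\{-\tau R_l(S,t)\}$ yields
\begin{equation*}
B_l(S,t) \;=\; \exp\bigl\{-\tau\bigl[R_l(\infty) - w(\tau) - H(\zeta_1\tau)X_1\bigr]\bigr\},
\end{equation*}
which is exactly the claim of Lemma \ref{lem}. The only point requiring any care is a bookkeeping check: one must verify that the factorisation of the $\tau\mapsto 0$ and $\tau\mapsto \infty$ limits of $H$ is consistent with the two-dimensional (matrix) form of $\zeta_1$ and $\kappa_{11}$ used throughout the paper, so that the scalar identity generalises componentwise to the vector/matrix setting. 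This is the only potentially delicate step; once it is noted that $H$ acts diagonally on the $(r,\lambda)$ components (as displayed in the proof of the proposition), the lemma follows without further computation.
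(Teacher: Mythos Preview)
Your proposal is correct and is, if anything, cleaner than the paper's own argument. The paper does not route through the yield--price inversion $B_l(S,t)=\exp\{-\tau R_l(S,t)\}$; instead it returns to the general bond price formula of Theorem~\ref{bond equation} (equation~\eqref{bond}) and substitutes into the exponent the constant-parameter integral $E(\tau)$ already computed in the proof of the proposition, then rearranges the resulting expression into the form $-\tau[R_l(\infty)-w(\tau)-H(\zeta_1\tau)X_1]$. Your approach recognises that the proposition has already packaged this exponent as $\tau R_l(S,t)$ in closed form, so a single exponentiation suffices; the paper's approach re-expands the same terms one level lower. Both rely on identical integral computations and are logically equivalent, but yours avoids the redundant algebra (and also avoids the stray ``as $\tau\to\infty$'' limit step that appears in the paper's proof but plays no role for a formula meant to hold for all $\tau$).
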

	\begin{proof}
		Substituting  equation \ref{constA} and  equation \ref{constB} into equation \ref{bond} 
		\begin{multline}
		B_l(S,t)=\exp\left\lbrace-\int_{t}^{S}\left\vert\mu(u)\right\vert du-\tau\left[\frac{\theta_1\kappa_{11}}{\zeta_1}(1-H(\zeta_1\tau))-\frac{1}{2}\left(\frac{\kappa_{11}}{\zeta_1}\right)^2\right]\right.\\ \left.+\tau\left[-\frac{1}{2}\frac{\kappa_{11}\kappa_{12}}{{\zeta_1\zeta_2}}((H{\zeta_1+\zeta_2})\tau)+\frac{\kappa_{11}\kappa_{12}}{{\zeta_1\zeta_2}}H(\zeta_{12}\tau)\right]+ \frac{G_1(S)- G_1(t)}{G_1^\prime(t)} X_1(t)\right\rbrace
		\end{multline}
		\begin{multline*}
		B_l(S,t)=\exp\left\lbrace-\mu\tau-\tau\frac{\theta_1\kappa_{11}}{\zeta_1}-\tau\frac{\theta_1\kappa_{11}}{\zeta_1}H(\zeta_1\tau)-\tau\frac{1}{2}\left(\frac{\kappa_{11}}{\zeta_1}\right)^2-\frac{1}{2}\tau\frac{\kappa_{11}\kappa_{12}}{{\zeta_1\zeta_2}}((H{\zeta_1+\zeta_2})\tau) \right. \\ \left. +\tau\frac{\kappa_{11}\kappa_{12}}{{\zeta_1\zeta_2}}H(\zeta_{12}\tau)+ \tau H(\zeta_1\tau)X_1(t)\right\rbrace\\
		=\exp\left\lbrace-\tau\left[\mu+\theta_1\frac{\kappa_{11}}{\zeta_1}-\frac{1}{2}\left(\frac{\kappa_{11}}{\zeta_1}\right)^2+H(\zeta_1\tau)\left[\theta_1\frac{\kappa_{11}}{\zeta_1}-\frac{\kappa_{11}\kappa_{12}}{{\zeta_1\zeta_2}}\right]\right. \right.\\ \left. \left.+\frac{1}{2}((H{\zeta_1+\zeta_2})\tau)\frac{\kappa_{11}\kappa_{12}}{{\zeta_1\zeta_2}}\tau- H(\zeta_1\tau)X_1(t)\right]\right\rbrace
		\end{multline*}
		As $\tau\rightarrow\infty$ 
		\begin{align*}
		\lim\limits_{\tau\rightarrow\infty}H(\zeta_1)&=0\\
		\lim\limits_{\tau\rightarrow\infty}H(\zeta_{12})&=0\\
		\lim\limits_{\tau\rightarrow\infty}((H{\zeta_1+\zeta_2})\tau)&=0
		\end{align*}
		\begin{equation*}
		\therefore B_l(S,t)=\exp\left\lbrace-\tau\left[R_l(\infty)+w(\tau)- H(\zeta_1\tau)X_1(t)\right]\right\rbrace \ \ \ t\in[0,S]
		\end{equation*}
	\end{proof}
	\section{Application of Kalman Filter Technique for Longevity Bond Price}
	The application of the Kalman filter to longevity bond price model is considered in this section. More especially, the application to one-factor longevity bond price model is considered as presented in the previous section. The Kalman filter equations are employed so that the unobservable variables and parameters of the longevity  bond price model can be estimated. To achieve this, the term structure of the models are represented in a linear state space form. This is to allow for measurement errors. 
	\subsection{State Space Formulation} 
	A state space model consists of a measurement equation relating the observed data and a Markovian transition equation that discribes the evolution of state vector over time. Here, state space model  of the longevity bond price model of one-factor Vasicek model is developed. After the state space form of the model has been written, the selection of $X_1$ is chosen by construction because it is generally not unique. That is, one the major aims of state space formulation is to set up $X_1$ such that it will contain the necessary information on the system at time $ t $.
	The basic criterion for a good state space representation is a minimal realisation.This is situation where the length of the state vector of the state space is minimised.
	\subsubsection{State Equation}
	The state equation is represented by the exact discrete-time distribution of the state variable gotten from equation \ref{dyna}.By solving the linear differential equation of the state variable, the exact discrete-time distribution is obtained.
	\begin{equation*}
	dX_1=-\zeta_1 X_1dt + c_1dW_1
	\end{equation*}
	\begin{align*}
	&X_1(t_n)=e^{-\zeta_1(t_n-t_{n-1})}X_1(t_{n-1})+\eta_1(t_n)\\
	&\left(\begin{array}{c}X_1^r(t_n)\\X_1^\lambda(t_n)\end{array}\right)=\left(\begin{array}{cc}
	e^{-\zeta_1^r(t_n-t_{n-1})}X_1^r(t_{n-1})\\e^{-\zeta_1^\lambda(t_n-t_{n-1})}X_1^\lambda(t_{n-1})
	\end{array}\right)+\left(\begin{array}{c}
	\eta_1^r(t_n)\\\eta_1^\lambda(t_n).
	\end{array}\right)
	\end{align*}
	\begin{equation*}
	\left(\begin{array}{c}X_1^r(t_n)\\X_1^\lambda(t_n)\end{array}\right)=\left(\begin{array}{cc}
	e^{-\zeta_1^r(t_n-t_{n-1})}&0\\0&e^{-\zeta_1^\lambda(t_n-t_n-1})
	\end{array}\right)\left(\begin{array}{c}
	X_1^r(t_{n-1})\\X_1^\lambda(t_{n-1})
	\end{array}\right)+\left(\begin{array}{c}
	\eta_1^r(t_n)\\\eta_1^\lambda(t_n).
	\end{array}\right)
	\end{equation*}
	Thus 
	\begin{equation*}
	X(t_n)=\gamma(\Phi)X(t_{n-1})+\eta(t_n)
	\end{equation*}
	\begin{equation}
	X_n=\gamma(\Phi)X_{n-1}+\eta_n
	\end{equation}
	\subsubsection{Measurement Equation}
	The measurement equation describes the relation between the  yields observed and mortality rates as measured with errors and the possible unobserved states.The measured errors are additive and normally distributed.
	Recall from equation \ref{const para} for yield of a zero coupon longevity bond with infinite maturity
	\begin{align}
	R_{l1}(t+\tau,t)&=R_{l1}(\infty)-w(\tau)-H(\zeta_1\tau)X_1(t)\\
	&=R_l(\infty)-w(\tau)-\left(\begin{array}{cc}
	H(\zeta_1^r\tau) &0\\
	0&H(\zeta_1^\lambda\tau)
	\end{array}\right)^\prime \left(\begin{array}{c}
	X_1^r(t)\\
	X_1^\lambda(t)
	\end{array}\right)\\
	&=A(\tau)-L(\tau)^\prime X_1(t)
	\end{align}
	where the transpose is denoted by the superscript,\\ and
	\begin{align*}
	A(\tau)=R_l(\infty)-w(\tau),\\
	L(\tau)=\left(\begin{array}{cc}
	H(\zeta_1^r\tau) &0\\
	0&H(\zeta_1^\lambda\tau)
	\end{array}\right)\\  and \ \ X_1(t)=\left(\begin{array}{c}
	X_1^r(t)\\
	X_1^\lambda(t)
	\end{array}\right)\\
	\end{align*}
	$A(\tau)$ is a scalar and are functions of the term to maturity $\tau$, then $L(\tau)$ is a $2\times2$\\and
	\begin{equation*}
	L(\tau)=\left(\begin{array}{cc}
	\frac{(1-e^{-\zeta_1^r\tau})}{\zeta_1^r\tau}&0\\
	0&\frac{(1-e^{-\zeta_1^\lambda\tau})}{\zeta_1^\lambda\tau}
	\end{array}\right)
	\end{equation*}
	where $\mu, \theta_1$ are parameters of the model.
	\begin{equation*}
	\begin{aligned}
	R_l(S,t)&=-\frac{log B_l(t_n+\tau_1,t_n)}{\tau_1}
	\\
	&=R_l(\infty)-w(\tau)-H(\zeta_1\tau_1)X_1(t_n)
	\\
	&= \left(\begin{array}{c}
	\mu^r\\ \mu^\lambda \end{array}\right)
	+ 
	\theta_1
	\left(\begin{array}{cc}
	\frac{\kappa_{11}}{\zeta_1^r}&0\\0& \frac{\kappa_{11}}{\zeta_1^\lambda}\end{array}\right) - \frac{1}{2}\left(\begin{array}{cc}
	\frac{\kappa_{11}}{\zeta_1^r}&0\\0& \frac{\lambda\kappa_{11}}{\zeta_1^\lambda}
	\end{array}\right)^2
	\\&
	-\left(\begin{array}{cc}H(\zeta_1^r)&0\\0&H(\zeta_1^\lambda)
	\end{array}\right)\left[\theta_1\left(\begin{array}{cc}
	\frac{\kappa_{11}}{\zeta_1^r}&0\\0&\frac{\kappa_{11}}{\zeta_1^\lambda}
	\end{array}\right)-\left(\begin{array}{cc}
	\frac{\kappa_{11}\kappa_{12}}{\zeta_1^r\zeta_2^r}&0\\0&\frac{\kappa_{11}\kappa_{12}}{\zeta_1^\lambda\zeta_2^\lambda}
	\end{array}\right)\right]
	\\&+\frac{1}{2}\left[\left(\begin{array}{cc}
	H(\zeta_1^r+\zeta_2^r)&0\\0&H(\zeta_1^\lambda+\zeta_2^\lambda)
	\end{array}\right)\tau\right]\left(\begin{array}{cc}\frac{\kappa_{11}\kappa_{12}}{\zeta_1^r\zeta_2^r}&0\\0&\frac{\kappa_{11}\kappa_{12}}{\zeta_1^\lambda\zeta_2^\lambda}\end{array}\right)
	\\&
	-\left(\begin{array}{cc}
	H(\zeta_1^r\tau) &0\\
	0&H(\zeta_1^\lambda\tau)
	\end{array}\right)\left(\begin{array}{c}
	X_1^r(t)\\
	X_1^\lambda(t)
	\end{array}\right)
	\end{aligned}
	\end{equation*} 
	The measurement equation is then given as:
	\begin{equation}
	R_{ln}=\gamma(\Phi)X(t_n)+d(\Phi)+\varepsilon_n, \ \ \varepsilon_n\sim N(0,H(\Phi))
	\end{equation}
	where $\Phi$ is made up the unknown parameters of the model  which includes the distribution of the measurement error. $d(\Phi)$ is an $N\times 1$ matrix whose row is given by $A(\tau_1;\Phi)$ and $\gamma(\Phi)$ is an $2\times 2$ matrix whose row is given as $-L(\tau_1;\Phi)^\prime$. \\ The  $d(\Phi)$ elements and that of $\gamma(\Phi)$ are then given by 
	\begin{equation*}
	d(\Phi)=A(\tau_1;\Phi)
	\end{equation*}
	and
	\begin{equation*}
	\gamma(\Phi)=-L(\tau_1;\Phi)^\prime
	\end{equation*}
	where 
	\begin{equation*}
	L(\tau_1;\Phi)=\left(\begin{array}{cc}
	H(\zeta_1^r\tau) &0\\
	0&H(\zeta_1^\lambda\tau)
	\end{array}\right) \qquad and \ \ A(\tau_1;\Phi)= R_l(\infty)-w(\tau)
	\end{equation*}
	Thus
	\begin{equation*}
	\gamma_1(\Phi)=\left(\begin{array}{cc}
	-H(\zeta_1^r\tau) &0\\
	0&-H(\zeta_1^\lambda\tau)
	\end{array}\right)
	\end{equation*}
	and 
	\begin{equation*}
	d(\Phi)=\left[R_l(\infty)-w(\tau)\right]
	\end{equation*}
	
	\subsection{Kalman Filter for Longevity Bond Price}
	The major purpose of the Kalman filter equations is to help in obtaining information about $X_n$ from the observed interest rates in the measurement equation. It also helps to estimate the unknown parameters of the models in  such a way that it includes all the relevant information on the equation at time $ t $.
	The basic criterion for a good state space representation is a minimal realisation. That is, a situation where the length of the state vector of the state space is minimised. The filter consists of two sets of equation
	\begin{enumerate}
		\item Prediction equations and
		\item Updating equations
	\end{enumerate}
	\subsubsection{Prediction Equations} 
	In the prediction equation, the information comes from the observed variables. it involves estimation of the unobserved state variables at a particular time based on the available information up to the time before that particular time. However, in this case, the unobserved variables are the observed interest rates which implies that the information comes from the observed interest rates.
	
	As the available information is given till time $t_{n-1}$,  the conditional expectation of the unknown vector space $X_n$ be denoted by $E_{n-1}(X_n)$. Also given the observed interest rates up to $t_{n-1}$ ,let the optimal estimator of $X_n$ be denoted by $X_{n|n-1}$. Hence  based on the observed interest rates up to time $t_{n-1}$, the optimal estimator of $X_n$ is the conditional expectation of $X_n$ given the available information up to time $t_{n-1}$. Thus we have 
	\begin{equation}
	\begin{aligned} \label{measurement}
	\hat{X}_{n|n-1}&= E_{n-1}(X_n) \\
	&= E_{n-1}(\gamma(\Phi)X_{n-1}+\eta_n),\\ 
	&=\gamma(\Phi)(\hat{X}_{n-1}).
	\end{aligned}
	\end{equation}
	where $ E_{n-1}(\hat{X}_{n-1})=X_{n|n-1} $ and $ E_{n-1}(\eta_n)=0$. From the observed longevity bond up to time $ t_{n-1} $, the prediction error is connected with estimating $ X_n $. Let $ \boldsymbol{B}_{n|n-1} $ represent the covariance marix of the estimator error.that is
	\begin{equation}
	\begin{aligned} \label {measurement1}
	\boldsymbol{B}_{n|n-1}&=E_{n-1}\left[ (X_n-\hat{X}_{n|n-1})(X_n-\hat{X}_{n|n-1})^\prime\right]\\
	&=\gamma(\Phi)E_{n-1}\left[ (X_n-\hat{X}_{n|n-1})(X_n-\hat{X}_{n|n-1})^\prime\right]\gamma^\prime(\Phi)+E_{n-1}\left[ \eta_n\eta_n^\prime\right] \\&=\gamma(\Phi)\boldsymbol{B}_{n-1}\gamma^\prime(\Phi)+V,
	\end{aligned}
	\end{equation}
	where $ E_{n-1}\left[ \eta_n\eta_n^\prime\right] = V$ is the covariance matrix of $\eta_n$.
	
	Thus equation \ref{measurement} and \ref{measurement1} are  the prediction eqautions which are used in the prediction step of the Kalman filter algorithem.
	\subsubsection{Updating Equations}
	The update procedure entails additional information on the longevity bond yield $R_{ln}$ at time $t_n$. This is to obtain more accurate and updated estimator of $X_n$. The new and present estimate of $X_n$ is called the filtered estimate. The moment new observed interest rates becomes available, $\hat{X}_{n|n-1}$ can then be updated to $\hat{X}_n$.\\ 
	While $\hat{X}_n$ denotes the optimal estimator of $X_n$  based on the available information from the observed interest rates up to time $t_n$ and $\boldsymbol{B}_n$ be the covariance matrix of the estimation error. Then the updating equations are given by 
	\begin{equation}
	\begin{aligned} \label{update 1}
	&\hat{X}_n=E_n(X_n)\\
	&\hat{X}_n= \hat{X}_{n|n-1}+\boldsymbol{B}_{n|n-1}Z^\prime F_n^{-1}v_n
	\end{aligned}
	\end{equation}
	and
	\begin{equation}
	\begin{aligned}
	\boldsymbol{B}_n&= E_n\left[ (X_n-\hat{X}_n)(X_n-\hat{X}_n)^\prime\right]\\
	&= \boldsymbol{B}_{n|n-1}-\boldsymbol{B}_{n|n-1}Z^\prime F_n^{-1}\boldsymbol{B}_{n|n-1} \\
	&= \left( \boldsymbol{B}_{n|n-1}^{-1}+ZH^{-1}Z\right) ^{-1},\label{justified}
	\end{aligned}
	\end{equation}
	where
	\begin{align}\label{residual}
	v_n &= R_n-(d+Z\hat{X}_{n|n-1})\\
	F_n&= H+Z\boldsymbol{B}_{n|n-1}Z^\prime
	\end{align}
	The log-likelihood function given by
	\begin{equation*}
	\begin{aligned}
	L&=-\frac{1}{2}\sum_{n=1}^{j}log\left| \boldsymbol{B}_n\right| -\frac{1}{2}\sum_{n=1}^{j}v_n\boldsymbol{B}_n^{-1}v_n\\
	L&=-\sum_{n=1}^{j}\left[\frac{1}{2}log\left| \boldsymbol{B}_n\right|+\frac{1}{2}v_n\boldsymbol{B}_n^{-1}v_n\right]
	\end{aligned} 
	\end{equation*}
	where $v_n$ is the correction term called measurement residual as given in equation \ref{residual}.
	The logarithm price is used to suit the Kalman filter’s linear structure.
	
	Using the updating equations, the optimal estimator at the next time step based on all the  available information up to the current time step can be obtained.
	\begin{equation}
	\hat{X}_{n+1|n}=\gamma(\Phi)\hat{X}_n+\eta_{n+1}\label{update ed 1}
	\end{equation}
	substituting equation\ref{measurement} and the value for $v_n$ into equation\ref{update ed 1} we have 
	\begin{equation}
	\begin{aligned}
	\hat{X}_{n+1|n}&=\gamma(\Phi)\left( \hat{X}_{n|n-1}+\boldsymbol{B}_{n|n-1}Z^\prime B_n^{-1}\left( R_{ln}-(d+Z\hat{X}_{n|n-1})\right)\right) +\eta_{n+1}\\
	&=\gamma(\Phi)\hat{X}_{n|n-1}+\gamma(\Phi)\boldsymbol{B}_{n|n-1}Z^\prime B_n^{-1}\left( R_{ln}-(d+Z\hat{X}_{n|n-1})\right)+\eta_{n+1}\\
	&=\gamma(\Phi)\hat{X}_{n|n-1}+\boldsymbol{K}_t\left( R_n-(d+Z\hat{X}_{n|n-1})\right)+\eta_{n+1} \\
	&=\left( \gamma(\Phi)-K_tZ\right)\hat{X}_{n|n-1}+K_tR_{ln}+\left(\eta_{n+1}-K_td\right)\label{Justifying}\end{aligned}
	\end{equation}
	from equation \ref{Justifying}, the Kalman gain matrix $ \boldsymbol{K}_t $ given by
	\begin{equation}
	\boldsymbol{K}_t=\gamma(\Phi)\boldsymbol{B}_{n|n-1}Z^\prime B_n^{-1}
	\end{equation}
	Eliminating $ F_n $ from $\boldsymbol{B}_n$ and then substituting the value for $\boldsymbol{B}_n$ in \ref{justified}
	\begin{equation*}
	\boldsymbol{B}_K=\boldsymbol{B}_{n|n-1}-\boldsymbol{B}_{n|n-1}Z^\prime(H+Z\boldsymbol{B}_{n|n-1}Z^\prime)^{-1}Z\boldsymbol{B}_{n|n-1}.
	\end{equation*}
	Recall, $ HH^{-1}=1 $ and $ (EG)^{-1}=G^{-1}E^{-1} $, we have
	\begin{equation*}
	\begin{aligned}
	(H+Z\boldsymbol{B}_{|n-1}Z^\prime)^{-1}=&\left( H(I+H^{-1}Z\boldsymbol{B}_{n|n-1}Z^\prime)\right) ^{-1}\\
	&=(I+H^{-1}Z\boldsymbol{B}_{n|n-1}Z^\prime)^{-1}H^{-1}
	\end{aligned}
	\end{equation*}
	Hence
	\begin{equation*}
	\boldsymbol{B}_n=\boldsymbol{B}_{n|n-1}-\boldsymbol{B}_{n|n-1}Z^\prime(I+H^{-1}Z\boldsymbol{B}_{n|n-1}Z^\prime)^{-1}H^{-1}Z\boldsymbol{B}_{n|n-1}
	\end{equation*}
	Note: The generalised inverse formular for the sum of invertible matrices as seen in \cite{Henderson and Searle (1981)} is given as
	\begin{equation}
	H=HU(I+BVHU)^{-1}BV H=(H^{-1}+UBV)^{-1}
	\end{equation}
	which gives
	\begin{equation} 
	\boldsymbol{B}_{n|n-1}-\boldsymbol{B}_{n|n-1}Z^\prime(I+H^{-1}Z\boldsymbol{B}_{n|n-1}Z^\prime)^{-1}H^{-1}Z\boldsymbol{B}_{n|n-1}=\left( \boldsymbol{B}_{n|n-1}^{-1}+Z^\prime H^{-1}Z\right)^{-1}
	\end{equation}
	\begin{equation}
	\boldsymbol{B}_{n|n-1}-\boldsymbol{B}_{n|n-1}Z^{-1}B_n^{-1}Z\boldsymbol{B}_{n|n-1}=\left( \boldsymbol{B}_{n|n-1}^{-1}+Z^\prime H^{-1}Z\right)^{-1}
	\end{equation}
	Thus the Kalman Filter equation for the longevity bond price is given as follows
	\begin{itemize}
		\item[1.] State Prediction
		\begin{equation*}
		\hat{X}_{n|n-1} = \gamma(\Phi)\hat{X}_{n-1|n-1}
		\end{equation*}
		\item[2.] Error Covariance Prediction
		\begin{equation*}
		\boldsymbol{B}_{n|n-1}=\gamma(\Phi)\boldsymbol{B}_{n-1|n-1}\gamma\prime(\Phi)+V
		\end{equation*}
		\item[3.] Kalman Gain
		\begin{equation*}
		\boldsymbol{K}_t=	\boldsymbol{B}_{n|n-1}ZF^{-1}_t
		\end{equation*}
		\item[4.] State Update
		\begin{equation*}
		\hat{X}_{n|n}= \hat{X}_{n|n-1}+\boldsymbol{K}_t(R_{ln}-(d+Z\hat{X}_{n|n-1}))
		\end{equation*}
		\item[5.] Error Covariance Update
		\begin{equation*}
		\boldsymbol{B}_{n|n}=\boldsymbol{B}_{n|n-1}-\boldsymbol{B}_{n|n-1}\boldsymbol{K}_tZ
		\end{equation*}
	\end{itemize}
	\section{Numerical Implementation}
	In this section, we we put into application all the preceding theoretical discussion of our estimation technique to our model.
	The exact discrete-time distribution of the state variable obtained from  equation \ref{dyna}\begin{equation*}
	dX_1=-\zeta_1 X_1dt + c_1dW_1
	\end{equation*}
	By solving the linear differential equation of the state variable, the exact discrete-time distribution is obtained, 
	\begin{equation*}
	X_n=\gamma(\Phi)X_{n-1}+\eta_n
	\end{equation*}
	The measurement equation is then given as:
	\begin{equation*}
	R_{ln}=\gamma(\Phi)X(t_n)+d(\Phi)+\varepsilon_n, \ \ \varepsilon_n\sim N(0,H(\Phi))
	\end{equation*}
	where $\Phi$ is made up the unknown parameters of the model  which includes the distribution of the measurement error. $d(\Phi)$ is an $N\times 1$ matrix whose row is given by $A(\tau_1;\Phi)$ and $\gamma(\Phi)$ is an $2\times 2$ matrix whose row is given as $-L(\tau_1;\Phi)^\prime$. \\ The  $d(\Phi)$ elements and that of $X(\Phi)$ are then given by 
	\begin{equation*}
	d(\Phi)=A(\tau_1;\Phi)
	\end{equation*}
	and
	\begin{equation*}
	\gamma(\Phi)=-L(\tau_1;\Phi)^\prime
	\end{equation*}
	where 
	\begin{equation*}
	L(\tau_1;\Phi)=\left(\begin{array}{cc}
	H(\zeta_1^r\tau) &0\\
	0&H(\zeta_1^\lambda\tau)
	\end{array}\right) \qquad and \ \ A(\tau_1;\Phi)= R(\infty)-w(\tau)
	\end{equation*}
	Thus
	\begin{equation*}
	\gamma_1(\Phi)=\left(\begin{array}{cc}
	-H(\zeta_1^r\tau) &0\\
	0&-H(\zeta_1^\lambda\tau)
	\end{array}\right)
	\end{equation*}
	and 
	\begin{equation*}
	d(\Phi)=\left[R_l(\infty)-w(\tau)\right]
	\end{equation*}
	The predicted equation is obtained by given the available information up to time $t_{n-1}$. As the available information is given till time $t_{n-1}$,  the conditional expectation of the unknown vector space $X_n$ be denoted by $E_{n-1}(X_n)$. Also given the observed interest rates up to $t_{n-1}$ ,let the optimal estimator of $X_n$ be denoted by $X_{n|n-1}$. Hence  based on the observed interest rates up to time $t_{n-1}$, the optimal estimator of $X_n$ is the conditional expectation of $X_n$ given the available information up to time $t_{k-1}$. That is 
	\begin{equation*}
	\begin{aligned}
	\hat{X}_{n|n-1}&= E_{n-1}(X_n) \\
	&=\gamma(\Phi)(\hat{X}_{n-1}),\label{predict}
	\end{aligned}
	\end{equation*}
	where $ E_{n-1}(\hat{X}_{n-1})=X_{n|n-1} $ and $ E_{n-1}(\eta_n)=0. $. Considering the error involved in estimating $ X_n $ during the prediction step will be the next thing. From the observed longevity bond up to time $ t_{n-1} $, the prediction error is connected with estimating $ X_n $. Thus the estimator error is given as
	\begin{equation*}
	\begin{aligned}
	\boldsymbol{B}_{n|n-1}&=E_{n-1}\left[ (X_n-\hat{X}_{n|n-1})(X_n-\hat{X}_{n|n-1})^\prime\right]\\
	&=\gamma(\Phi)\boldsymbol{B}_{n-1}\gamma^\prime(\Phi)+V,\label{error}
	\end{aligned}
	\end{equation*}
	where $ E_{n-1}\left[ \eta_n\eta_n^\prime\right] = V$ is the covariance matrix of $\eta_n$ and $\boldsymbol{B}_{n|n-1}$ is the covariance matrix of the estimated error.
	
	The updating equation is given as
	\begin{equation*}
	\begin{aligned}
	&\hat{X}_n=E_n(X_n)\\
	&\hat{X}_n= \hat{X}_{n|n-1}+\boldsymbol{B}_{n|n-1}Z^\prime F_n^{-1}v_n\label{update}
	\end{aligned}
	\end{equation*}
	and
	\begin{equation*}
	\begin{aligned}
	\boldsymbol{B}_n&= E_n\left[ (X_n-\hat{X}_n)(X_n-\hat{X}_n)^\prime\right]\\
	&= \boldsymbol{B}_{n|n-1}-\boldsymbol{B}_{n|n-1}Z^\prime F_n^{-1}\boldsymbol{B}_{n|n-1} \\
	&= \left( \boldsymbol{B}_{n|n-1}^{-1}+ZH^{-1}Z\right) ^{-1},\label{justify}
	\end{aligned}
	\end{equation*}
	where
	\begin{align*}
	v_n &= R_{ln}-(d+Z\hat{X}_{n|n-1})\\
	F_n&= H+Z\boldsymbol{B}_{n|n-1}Z^\prime
	\end{align*}
	Thus the log-likelihood function given by
	\begin{equation*}
	L=-\sum_{n=1}^{j}\left[\frac{1}{2}log\left| \boldsymbol{B}_n\right|+\frac{1}{2}v_n\boldsymbol{B}_n^{-1}v_n\right] 
	\end{equation*}
	where $v_n$ is the correction term called measurement residual as given in equation \ref{residual}.
	The logarithm price is used to suit the Kalman filter’s linear structure.
	
	The next optimal estimator will be based on all the available information up to the current time step which is gotten using the updating equation
	\begin{equation*}
	\hat{X}_{n+1|n}=\gamma(\Phi)\hat{X}_n+\eta_{n+1}\label{update equ}
	\end{equation*}
	\begin{equation*}
	\begin{aligned}
	\hat{X}_{n+1|n}&=\gamma(\Phi)\left( \hat{X}_{n|n-1}+\boldsymbol{B}_{n|n-1}Z^\prime B_n^{-1}\left( R_{ln}-(d+Z\hat{X}_{n|n-1})\right)\right) +\eta_{n+1}\\
	&=\left( \gamma(\Phi)-\boldsymbol{K}_tZ\right)\hat{X}_{n|n-1}+\boldsymbol{K}_tR_n+\left(\eta_{n+1}-\boldsymbol{K}_td\right)\label{Justification}\end{aligned}
	\end{equation*}
	and
	\begin{equation*}
	\boldsymbol{K}_t=\gamma(\Phi)\boldsymbol{B}_{n|n-1}Z^\prime B_n^{-1}
	\end{equation*}
	\pagebreak
	\begin{table}
		\begin{center}
			\begin{tabular}{ |p{3cm}|p{3cm}|p{3cm}| }
				
				\hline
				\multicolumn{3}{|c|}{Results} \\
				\hline
				Parameters & True & Estimated    \\
				\hline\hline
				$\mu$ & 0.1& 0.97 \\
				\hline
				$\zeta$ &1 &2.06 \\
				\hline
				$c_1$ & 3& 3.1 \\
				\hline
			\end{tabular}
		\end{center}
		\caption{Result of the graph below}
	\end{table}
	Measurement Noise Covariance $= R_l = 0.0043$
	
	The kalman filter tends to identify very closely the mean-reversion, long-term mean and volatility parameters and also offers a good approximation of the true parameters from a relatively small sampling space using the generated observed prices.
	\begin{figure}[!h]
		\begin{center}
			\includegraphics[scale=0.55]{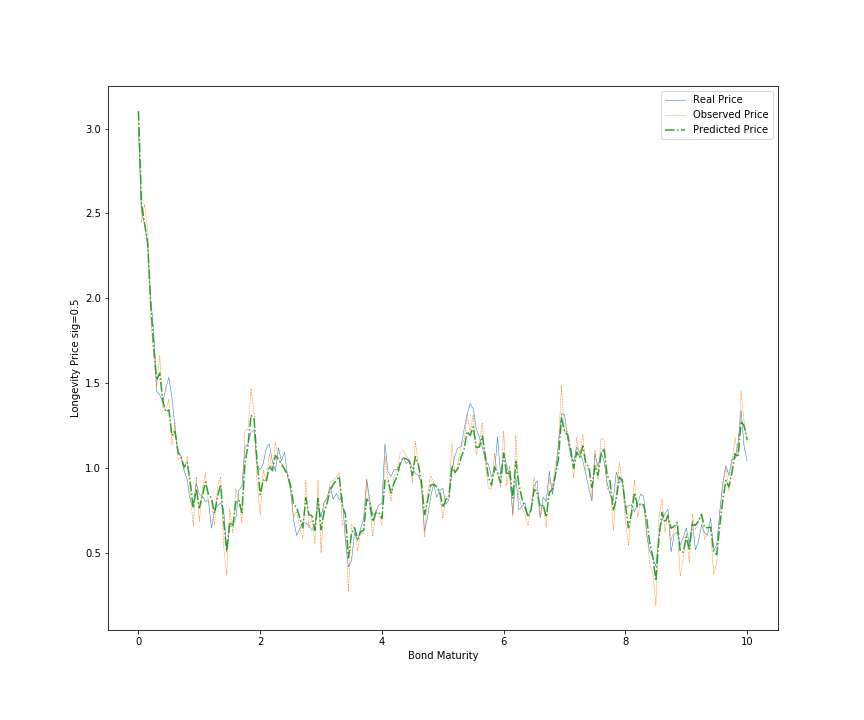}
			\caption{Simulation of longevity Bond Price}
			\label{figure 3.1:Vasicek Longevity Bond Simulation }
		\end{center}
	\end{figure}
	\pagebreak
	\subsection{Model Simulation Results}
	The results of the actual longevity bond market data is presensented in this section. The U.S. Treasury Bonds data as used in this work was downloaded from the website \textcolor{blue}{Yahoo finance}. The raw data includes the Date, Open, High, Low, Close, Adjusted (Adj) Close and Volume. The data is a daily quotes ranging from $31st$ of December 1992 to the $8th$ of November 2017 with 10 years maturity.
	\pagebreak
	\begin{table}
		\begin{tabular}{ |p{1cm}|p{2cm}|p{1.5cm}|p{1.5cm}|p{1.5cm}|p{1.5cm}|p{2cm}|p{1.5cm}| }
			\hline
			\multicolumn{8}{|c|}{U.S Treasury Bond} \\
			\hline
			&Date &	Open &	High &	Low &	Close &	Adj Close &	Volume\\
			\hline\hline
			0 &	1992-12-31& 	6.70 &	6.70 &	6.70 &	6.70 &	6.70 &	0.0\\
			\hline
			1 &	1993-01-04 &	6.60 &	6.60 &	6.60 &	6.60 &	6.60 &	0.0\\
			\hline
			2 &	1993-01-05 &	6.61 &	6.61 &	6.61 &	6.61 &	6.61 &	0.0\\
			\hline
			3 &	1993-01-06 &	6.63 &	6.63 &	6.63 &	6.63 &	6.63 &	0.0\\
			\hline
			4 &	1993-01-07 &	6.76 &	6.76 &	6.76 &	6.76 &	6.76 &	0.0\\
			\hline
			5 &	1993-01-08 &	6.75 &	6.75 &	6.75 &	6.75 &	6.75 &	0.0\\
			\hline
			6 &	1993-01-11 &	6.71 &	6.71 &	6.71 &	6.71 &	6.71 &	0.0\\
			\hline
			7 &	1993-01-12 &	6.72 &	6.72 &	6.72 &	6.72 &	6.72& 	0.0\\
			\hline
			8 &	1993-01-13&	6.71 &	6.71& 	6.71 &	6.71 &	6.71 &	0.0\\
			\hline
			9 &	1993-01-14 &	6.65 &	6.65 &	6.65 &	6.65 &	6.65& 	0.0\\
			\hline
			10 &	1993-01-15 &	6.60& 	6.60 &	6.60 &	6.60 &	6.60 &	0.0\\
			\hline
			11& 	1993-01-18& 	NaN &	NaN &	NaN &	NaN &	NaN &	NaN\\
			\hline
			12 &	1993-01-19 &	6.59& 	6.59 &	6.59& 	6.59 &	6.59 &	0.0\\
			\hline
			13 &	1993-01-20 &	6.61& 	6.61 &	6.61 &	6.61 &	6.61& 	0.0\\
			\hline
			14 &	1993-01-21 &	6.60 &	6.60 &	6.60 &	6.60 &	6.60 &	0.0\\
			\hline
			15 &	1993-01-22 &	6.57 &	6.57 &	6.57 &	6.57 &	6.57& 	0.0\\
			\hline
			16& 	1993-01-25 &	6.48 &	6.48 &	6.48 &	6.48 &	6.48 &	0.0\\
			\hline
			17 &	1993-01-26& 	6.50 &	6.50 &	6.50 &	6.50 &	6.50 &	0.0\\
			\hline
			18 &	1993-01-27& 	6.48 &	6.48 &	6.48 &	6.48 &	6.48 &	0.0\\
			\hline
			19& 	1993-01-28 &	6.44 &	6.44& 	6.44 &	6.44 &	6.44 &	0.0\\
			\hline
		\end{tabular}
		\caption{U.S Treasury Bond Maturity}
	\end{table}
	
	Adj Close price is used in adjusting the closing price of a stock in order to correctly reflect the value of that
	stock after accounting for any corporate shares. The estimated parameter set are then given in the table 3
	below.
	\pagebreak
	\begin{table}
		\begin{center}
			\begin{tabular}{ |p{3cm}|p{3cm}|p{3cm}| }
				\hline
				\multicolumn{3}{|c|}{Estimated Parameter Set} \\
				\hline
				Parameters & True & Estimated    \\
				\hline\hline
				$\mu$ & 3& 3.19 \\
				\hline
				$\zeta$ &0.5 &0.4 \\
				\hline
				$c_1$ & 1& 1.4 \\
				\hline
			\end{tabular}
		\end{center}
		\caption{Estimated Parameter Set}
	\end{table}
	Measurement Noise Covariance $= H = 0.0102$
	\begin{figure}[!h]
		\begin{center}
			\includegraphics[scale=0.55]{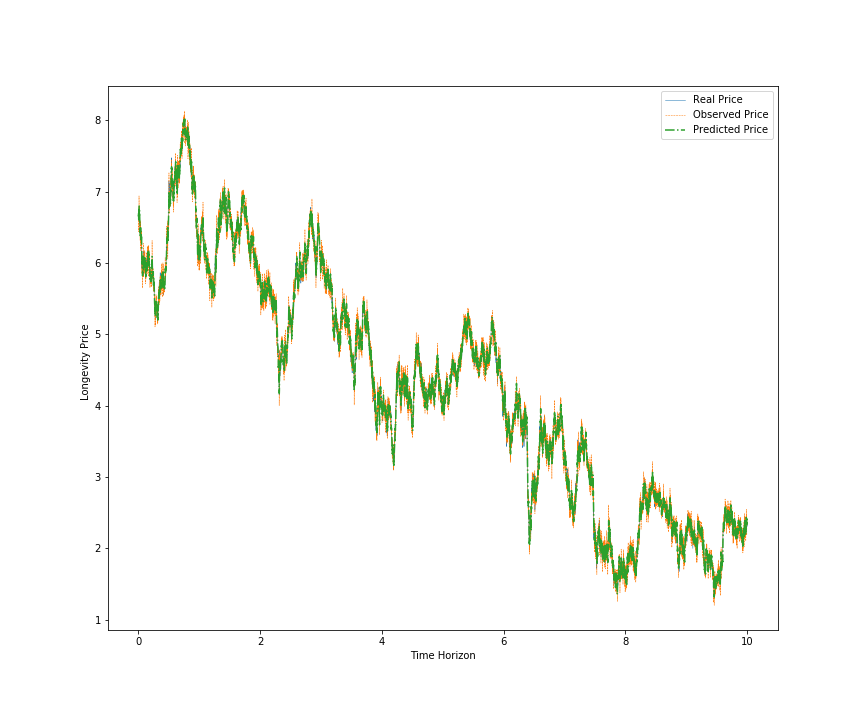}
			\caption{Simulation of longevity Bond Price}
			\label{figure 3.1:Vasicek Longevity Bond Simulation }
		\end{center}
	\end{figure}
	\newpage
	Figure 1 and Figure 2, shows that a mean reversion cycle matches the real price (solid
	line).An increase in the  measurement noise which is usually distributed to the real price gives the observed
	value system (thin line). The predicted price is calculated using the Kalman filter (dot-dashed line)
	which is real close to the observed price. The Kalman filter with Gaussian distributions is quite robust and adaptable to univariating and multivariating economic system variables. 
	
	The application of a Kalman filter in this instance to match
	price information basically replicates a linear fitting routine.Also the application of Kalman filter to estimate the parameter set using maximum likehood to optimize our
	parameter set proves to be almost realistic. Although there are some errors in our estimate parameters, however, our predicted price looks more likely like the real prices which shows that Kalman filter is a good method to estimate unobservable parameter set. To observe the action of mean reversion, we simulated the stochastic behaviour of bond price with a mean reversion speed equal to $0.5 (\zeta = 0.5)$ to observe the weak mean reversion.
	
	\section{Conclusion}
	In this paper a one-factor Vasicek model was estimated for the U.S. Treasury term structure from  December, 1992 to January, 1993. We modeled a linear state-space continuous time term structure model. Firstly, we specified a time series process for the instantaneous spot interest rate where the zero coupon longevity bond price formula is a function of the model's parameters and the unobserved instantaneous spot rate. The parameters  of the model are the long-run mean rate, the mean reversion towards the long-run mean, diffusion coefficients of the short-term interest rate and the market price of risk. The model is estimated using a maximum likelihood approach which is based on the Kalman filter. The Kalman filter recursive algorithm uses observable data on bonds to extract values for the unobserved state variables which combines the cross section and time series information in the term structure.
	
	The observed yields on zero-coupon longevity bonds is modelled as a linear function of one-factor state variable. Result of the empirical analysis is based on monthly observations of U.S. Treasury zero coupon bonds from December, 1992 to January, 1993. Ten maturities have been chosen which spans across the yield curve from 1 year to 20 years. The yield curve is expected to include influences on the short, medium and infinite horizon of the term structure. The model  parameters  and their standard errors are estimated.
	
	The empirical results of the analysis show that the unobserved instantaneous interest rate tends to exhibit a mean reverting behaviour in the U.S. term structure. The empirical evidence also indicates that to model adequately historical mortality trends at different ages, both the survival rate and the yield data are necessary for an optimal satisfactory for the empirical fit over the zero coupon longevity bond term structure. The result from the model dynamics allowed us to simulate the survival rates, which enabled us to model longevity risk. The affine form of longevity bond price models is able to be used not only for pricing and hedging longevity risk for pension funds and  insurers but also for evaluating capital requirements for risk management. 
	
	\section*{Acknowlegment}
	My sincre gratitude to University of South Africa for giving me Bursary funding thereby making the financial burden lessened for this research. Also to AIMS South Africa for allowing me use their research facilities during the period of this research.\newpage
	
	%% References with bibTeX database:

\end{document}